\titlespacing{\section}{0pt}{2ex}{1ex}
\titlespacing{\subsection}{0pt}{1ex}{0ex}
\titlespacing{\subsubsection}{0pt}{0.5ex}{0ex}
\def\bw{\mathbf{w}}
\def\bz{\mathbf{z}}
  \providecommand\BibTeX{{%
    \normalfont B\kern-0.5em{\scshape i\kern-0.25em b}\kern-0.8em\TeX}}}
\begin{document}
\title{Choosing News Topics to Explain Stock Market Returns}

\author{Paul Glasserman}
\authornote{Authors are listed in alphabetical order.}
\email{pg20@columbia.edu}
\affiliation{%
  \institution{Columbia Business School}
}

\author{Kriste Krstovski}
\email{kriste.krstovski@columbia.edu}
\affiliation{%
  \institution{Columbia Business School}
}

\author{Paul Laliberte}
\email{pl2669@columbia.edu}
\affiliation{%
  \institution{Columbia Business School}
}

\author{Harry Mamaysky}
\email{hm2646@columbia.edu}
\affiliation{%
  \institution{Columbia Business School}
}

\begin{abstract}
We analyze methods for selecting topics in news articles to explain stock returns. We find, through empirical and theoretical results, that supervised Latent Dirichlet Allocation (sLDA) implemented through Gibbs sampling in a stochastic EM algorithm will often overfit returns to the detriment of the topic model. We obtain better out-of-sample performance through a random search of plain LDA models. A branching procedure that reinforces effective topic assignments often performs best. We test methods on an archive of over 90,000 news articles about S\&P 500 firms.
\end{abstract}

\begin{CCSXML}
<ccs2012>
   <concept>
       <concept_id>10010147.10010178.10010179.10003352</concept_id>
       <concept_desc>Computing methodologies~Information extraction</concept_desc>
       <concept_significance>500</concept_significance>
       </concept>
   <concept>
       <concept_id>10010147.10010257.10010321.10010336</concept_id>
       <concept_desc>Computing methodologies~Feature selection</concept_desc>
       <concept_significance>500</concept_significance>
       </concept>
 </ccs2012>
\end{CCSXML}

\ccsdesc[500]{Computing methodologies~Information extraction}
\ccsdesc[500]{Computing methodologies~Feature selection}

\keywords{text analysis, finance, supervised topic models}

\maketitle

\section{Introduction}
\label{s:intro}

News sentiment has been widely observed to help forecast stock market returns. The ability of news \emph{topics} to {\em explain} contemporaneous returns has received less attention. Topics are relevant because, for example, news of a fraud investigation at a company may explain a negative return, while a new product announcement will often be associated with positive returns. Topics may also be relevant to explaining return volatility.  Furthermore, topics that are useful for explaining contemporaneous returns may also be useful for forecasting future returns.
For example, a 2\% stock price drop associated with a topic about accounting scandals may be associated with continued price drops in the future; whereas a 2\% stock price drop associated with a topic about a large hedge fund liquidation may be expected to reverse over the short-term.  Not all 2\% price drops are created equal, and being able to associate a particular move to an underlying reason may yield important insights about what comes next.  This paper fits into a broader research agenda that aims to first explain {\em  contemporaneous} price moves, and only then to think about forecasting {\em future} ones.
See \cite{tetlock2014information} and references there for background on news and text analysis in financial economics.

With these considerations in mind, we are interested in finding topics in news text specifically designed to explain market responses. This is a problem of supervised topic modeling, in which a news article about a company is labeled with the company's contemporaneous stock return on the day the news article is released.

We study the use of the supervised Latent Dirichlet Allocation (sLDA) framework of \cite{mcauliffe_blei_2008}. In the sLDA generative model, a document's label is generated through a function of the proportions of words in the document associated with each topic. The topic model is otherwise as in plain LDA \cite{blei_ea_2003}. In most applications of sLDA, the labels are discrete and the objective is to classify documents. Our goal is to explain contemporaneous returns (or volatility), so we face a regression problem rather than a classification task. We also sketch a foundation for this formulation based on finance theory.

SLDA can be implemented through a combination of Gibbs sampling and a stochastic EM algorithm, as in the R package \texttt{lda} and \cite{boydgraber_resnik_2010}.  The ``M'' step updates coefficients by regressing returns (labels) on the topic distribution in each document. The ``E'' step updates topic assignments conditional on a document's words and label.

We show that the performance of the algorithm is very sensitive to the relative weight applied to the text and label in the ``E'' step, where the relative weight is controlled by a variance parameter in the regression model. With too much weight on the labels, the ``E'' step will often drive the algorithm to overfit returns to the detriment of the topic model, resulting in poor out-of-sample performance in explaining returns. We provide some theoretical explanation for this behavior. With too little weight on the labels, sLDA tends to find the same topics as plain LDA.

To get around this difficulty, we propose a random search of alternative LDA estimates that reinforces topic assignments that help explain returns. In our branching method, we start with multiple independent runs of LDA through Gibbs sampling. Of these, we select the one that yields the best explanatory power for returns. This winner becomes the starting point for a new set of LDA runs, and the process repeats. Crucially, in each round the winner is selected through cross-validation. This ensures that our selection process favors topic assignments that explain returns (out of sample) while avoiding the overfitting that results from the in-sample estimates used in the ``E'' step of the stochastic EM algorithm.  

We compare this branching algorithm (and some simpler alternatives) with plain LDA and sLDA. In regressions of returns on topic assignments of a hold-out sample, we find that the search methods improve over plain LDA and significantly outperform sLDA.  Stock market returns are of course very difficult to explain \cite{roll-1988}, so the absolute levels of $R^2$ in our regressions are very small, but the relative improvements using the methods we propose are substantial.

To get a sense of the magnitudes, consider the analysis in \cite{tsm-2008}.   Using over 20 years of stock return data for S\&P 500 firms, they find a regression with a rich set of forecasting variables (lagged prices, accounting information, and text-based measures of news flow) has an $R^2$ for next day returns of roughly 0.2\%.  A similar regression for {\em same day} returns has an $R^2$ of 0.45\%.  Both are in-sample numbers.  Our topic model's out-of-sample ability to explain contemporaneous returns far exceeds the 0.45\% threshold from the prior literature, and may, in fact, represent a sort of natural bound on how much of contemporaneous return variation is explainable in the first place, with the remainder being inexplicable noise.%
\footnote{\cite{nguyen-shirai-2015-topic} uses an LDA-type model to forecast whether the next day's stock move will be up or down -- a very different task than obtaining an expected return from a regression -- and finds that, for a sample of five stocks over the course of one year, discussion topics on Yahoo Finance Message Board lead to a 56\% correct classification rate.  Given the small sample size of the analysis, the statistical significance of this finding is not clear.}

\section{Supervised Topic Models}
\label{s:stm}

This section discusses topic models, from plain LDA to sLDA.

\subsection{Latent Dirichlet Allocation}
\label{s:lda}

A corpus is a collection of documents, and a document is a collection of words. A topic is a probability distribution over words in a corpus. For example, one topic might assign relatively high probability to the words ``debt,'' ``credit,'' and ``rating,'' and another topic to the terms ``China,'' ``Japan,'' and ``Asia.'' We observe words but not the identity of their associated topics. The word ``bond,'' might result from a debt markets topic or from an Asian markets topic. To infer the latent topic assignments from the observed text, we need a model of how text is generated.

Let $\bw = \{\bw(i,j), i=1\dots,N_j,j=1,\dots,J\}$ denote a corpus of $J$ documents, in which $\bw(i,j)$ is the $i$th word in document $j$, and $N_j$ is the total number of words in document $j$. Words are drawn from a collection wide vocabulary of size $V$; we identify the vocabulary with the set $\{1,\dots,V\}$ and take each word $\bw(i,j)$ to be an element of this set. Suppose the text is generated by $K$ topics. Let $\bz = \{\bz(i,j), i=1\dots,N_j,j=1,\dots,J\}$, where $\bz(i,j)\in \{1,\dots,K\}$ is the topic assignment for $\bw(i,j)$.

LDA posits that the corpus is generated as follows:
\begin{enumerate}
\item For each topic $k=1,\dots,K$, generate a random probability distribution $\phi_k$ over the vocabulary $\{1,\dots,V\}$;
\item For each document $j=1,\dots,J$, generate a random probability distribution $\theta_j$ over the set of topics $\{1,\dots,K\}$;
\begin{enumerate}
\item For each $i=1,\dots,N_j$,
\begin{enumerate}
\item draw topic $\bz(i,j)$ from the distribution $\theta_j$
\item draw word $\bw(i,j)$ from the distribution $\phi_{\bz(i,j)}$
\end{enumerate}
\end{enumerate}
\end{enumerate}
In Step (1), each $\phi_k$ is drawn from a Dirichlet distribution with parameters $(\beta_1,\dots,\beta_V)$, and in Step (2), each $\theta_j$ is drawn from a Dirichlet distribution with parameters $(\alpha_1,\dots,\alpha_K)$. (A Dirichlet distribution is a distribution over nonnegative vectors with components that sum to 1.) We will make the standard assumptions that $\beta_1=\cdots=\beta_V=\beta$, $\alpha_1=\cdots=\alpha_K=\alpha$, and that these hyperparameters are known along with $K$. In our implementation, we use the standard values of $\alpha=5/K$ and $\beta=0.01$. 

The LDA generative model implicitly determines the joint distribution
$p(\bz,\bw)$ of topic assignments and words. Only the words
$\bw$ are observed, so inference in LDA entails estimating or
approximating the posterior distribution $p(\bz|\bw)$ of topic
assignments given the text. From $\bz$, one can estimate
the topic distributions $\phi_k$ and the prevalence of each
topic in each documents. For example, based on the proportion
of words assigned to each topic, we might conclude that a news
article is 45\% about Asian markets, 30\% about debt markets,
and 25\% about other topics.

For LDA inference, we use the collapsed Gibbs sampling method of
\cite{griffiths_steyvers_2004}. Gibbs sampling draws samples (approximately) from the
posterior distribution $p(\bz|\bw)$. The procedure iteratively
sweeps through the topic assignments, at each step resampling
the assignment $\bz(i,j)$ conditional on $\bw$ and conditional
on the collection $\bz_{-(i,j)}$ of all assignments other than the
$(i,j)$th. The update probabilities take the form
\begin{equation}
p(\bz(i,j)=k|\bz_{-(i,j)},\bw) \propto \frac{N_{k,v}+\beta}{\sum_{v'}(N_{k,v'}+\beta)}(N_{j,k}+\alpha),\label{gibbs}
\end{equation}
where $v=\bw(i,j)$, $N_{k,v}$ counts the number of times word $v$ is assigned to topic $k$ under the assignments $\bz_{-(i,j)}$, and $N_{j,k}$ counts the number of words in document $j$ assigned to topic $k$ under the assignments $\bz_{-(i,j)}$. These counts exclude $\bz(i,j)$ and $\bw(i,j)$. Through repeated application of (\ref{gibbs}) the distribution of the assignments $\bz$ converges to the posterior distribution $p(\bz|\bw)$.

\subsection{Supervised LDA}
\label{s:slda}

The generative model for sLDA expands the LDA model to include the generation of a label $y(j)$, $j=1,\dots,J$, for each document. In our main application, a document consists of all news articles about a company in a given trading period (either intraday or overnight), and $y(j)$ is that period's stock return for that company.

In sLDA, labels are driven by the distribution of document topics.  For each document $j$, let $\bar{z}_j$ denote the vector of topic frequencies
$$
\bar{z}_j(k) = \frac{1}{N_j}\sum_{i=1}^{N_j}\mathbf{1}\{z(i,j)=k\},
\quad k=1,\dots,K,
$$
so $\bar{z}_j(k)$ is the fraction of words in document $j$
assigned to topic $k$. Here, $\mathbf{1}\{\cdot\}$ denotes
the indicator function of the event $\{\cdot\}$.
For some vector of coefficients $\eta_* = (\eta_1,\dots,\eta_K)^{\top}$
and standard deviation parameter $\sigma_*$, we add to the
LDA generative model the step
\begin{enumerate}
\item[(2)(b)] Generate $y(j)$ from the normal distribution with mean $\eta_*^{\top}\bar{z}_j$ and variance $\sigma^2_*$.
\end{enumerate}

Our goal is to compute an estimate $\hat{\eta}$ of $\eta_*$ along with the topic model. Given a new news article about a company, we could then apply the trained LDA model to estimate topic frequencies $\bar{z}'$ in the new article and forecast the company's contemporaneous trading period stock return as $y' = \hat{\eta}^{\top}\bar{z}'$.  This would be useful when seeing a news article for the first time and being unsure of how the market {\em should} react to that article, as well as in studying whether observed price responses to news are consistent with historical norms.

Gibbs sampling under LDA can be expanded for inference under sLDA. Under the sLDA generative model, the labels $y(j)$ are independent of the text $\bw$ given the topic assignments $\bz$. The posterior of $\bz$ given $\bw$ and $y$ has the form
$$
p(\bz|\bw,y) = \frac{p(\bz,\bw)p(y|\bz,\bw)}{p(\bw,y)}= \frac{p(\bz,\bw)p(y|\bz)}{p(\bw,y)} \propto
p(\bz,\bw)p(y|\bz).
$$
The factor $p(\bz,\bw)$ is determined entirely by the LDA part of the model and is unaffected by the labels. In this formulation, we are treating $\eta_*$ and $\sigma_*$ as unknown parameters rather than imposing a prior distribution.

Using the conditional normality in Step (2)(b), we get
\begin{eqnarray}
p(\bz|\bw,y) 
&\propto& p(\bz,\bw)\prod_{j=1}^{J} \frac{1}{\sqrt{2\pi}\sigma_*}e^{-\frac{1}{2\sigma^2_*}(y(j)-\eta^{\top}_*\bar{z}_j)^2} \\
\label{postrue}
&=& p(\bz,\bw) \prod_{j=1}^J\varphi(y(j);\eta^{\top}_*\bar{z}_j,\sigma^2_*) \nonumber
\end{eqnarray}
with $\varphi(\cdot;\mu,\sigma^2)$ the normal
density with mean $\mu$ and variance $\sigma^2$.

The stochastic EM algorithm combines Gibbs sampling with
regression for inference in sLDA. Given topic assignments
$\bz$, the ``M'' step runs a regression of the labels
$y(j)$, $j=1,\dots,J$, on the frequency vectors $\bar{z}_j$, 
$j=1,\dots,J$, to compute an estimate $\hat{\eta}$.
(We do not include a constant in this regression because
the components of each $\bar{z}_j$ sum to 1.)
Given $\hat{\eta}$, the ``E'' step runs (multiple iterations of)
the Gibbs sampler to update the topic assignments;
but the Gibbs sampler now tilts the probabilities in (\ref{gibbs})
to favor topic assignments that do a better job explaining
labels. Let $\bar{z}_j^k$ denote the topic proportions for
document $j$ when $\bz(i,j)$ is set to $k$ and all other
assignments are unchanged. Then (\ref{gibbs}) is replaced with
\begin{eqnarray}
\lefteqn{p(\bz(i,j)=k|\bz_{-(i,j)},\bw,y) } && \nonumber\\
&\propto& 
\left(\frac{N_{k,v}+\beta}{\sum_{v'}(N_{k,v'}+\beta)}(N_{j,k}+\alpha)\right)
\varphi(y(j);\hat{\eta}^{\top}\bar{z}_j^k,\sigma^2).
\label{sgibbs}
\end{eqnarray}
Compared with (\ref{gibbs}), (\ref{sgibbs}) will give greater
probability to topic $k$ if that assignment increases the
$\varphi$ factor --- in other words, if that assignment
reduces the error $y(j) - \hat{\eta}^{\top}\bar{z}_j^k$
in explaining the label $y(j)$ with the current coefficients
$\hat{\eta}$.  We will return to the important question of
choosing $\sigma$ in (\ref{sgibbs}) in Sections \ref{s:perf} and \ref{s:app}.

\subsection{Foundation in Financial Economics}
\label{s:found}

Before proceeding with our investigation into choosing topics to explain returns, we briefly outline a basis for this investigation in financial economics and provide an underpinning for Step (2)(b), which
posits a linear relation between news topics and returns.

Campbell \cite{campbell_1991} shows that stock returns can be decomposed into changes in investor beliefs about future dividend growth and future discount rates.  In their analysis, the stock return $h_{t+1}$ from day $t$ to day $t+1$ can be approximated as
\begin{equation} \label{eq:c-s}
  h_{t+1} \approx a + (E_{t+1} - E_t) \sum_{j=0}^\infty \rho^j \Delta d_{t+1+j} - (E_{t+1} - E_t)
  \sum_{j=1}^\infty \rho^j h_{t+1+j},
\end{equation}
where $a$ is a constant, $\rho<1$ is a constant discounting factor, $d_t$ is the time $t$ log
dividend, $E_t$ is an expectation taken over the investor's information set at time $t$, and the
change-in-beliefs operator $(E_{t+1} - E_t) X$ is shorthand for $E_{t+1} X - E_t X$.  

We conjecture that the information in news articles affects investor beliefs about future discount rates and future dividend growth. In particular, we assume that, for firm $f$, explainable period-over-period changes in investor beliefs about future discount rates $h$ and future dividend growth $\Delta d$ are both linear functions of trading period's $t+1$'s average document-topic assignments for $f$, i.e. $\bar z_{f,t+1}$. This represents the average document-topic assignments of all news articles mentioning $f$ that came out from the end of trading period $t$ to the end of trading period $t+1$.  In addition, we assume the unexplainable change in investor beliefs is normally distributed.  Both of these assumptions are satisfied if, for example, the state variables follow a vector autoregressive process as in \cite{campbell_1991}.  Under these conditions, firm $f$'s returns can be written as
\begin{equation} \label{eq:y} 
h_{t+1} = \eta_*^\top \bar z_{f,t+1} + \epsilon_{t+1} \sim N(\eta_*^\top \bar z_{f,t+1},\sigma_*^2),
\end{equation}
for some vector $\eta$.  The $a$ in (\ref{eq:c-s}) is redundant because the elements of $\bar z$ sum to one. (\ref{eq:y}) is the response equation in our sLDA framework.

\section{News and Stock Market Data}
\label{s:data}

Our text data consists of news articles from the Thomson Reuters (TR) archive between 2015 and 2019.  We select articles that mention firms in S\&P 500 index on the day the article appears. We stem words, remove stop words, drop words that appear less than 7 times total, and drop articles with fewer than 50 words. This leaves a vocabulary of 48,966 words. We exclude articles that mention more than three companies. The process yields 90,544 articles.

If trading period $t+1$ is overnight, then it contains news that came out from 4pm of the prior trading day to 9:30am of the present day.  If it is intraday, the trading period contains news articles from 9:30am to 4pm of the present day.  We count any news over the weekend or holiday as belonging to the overnight trading period from the prior market close (4pm) to the next market open (9:30am).  All times are New York times.  We separate these time periods because this study is part of a larger project to understand differences between intraday and overnight news and returns.

In each trading period we combine all news articles that mention a single company into a single document and label that document with the stock return (or squared return) for that company. For articles mentioning multiple companies, we label the article with the average stock return of the underlying companies. We obtain stock returns from CRSP. 

Some of our analysis also assigns a sentiment score to each article.  We calculate sentiment as the difference between the number of positive and negative words in the article divided by the total number of words. We use the Loughran-McDonald \cite{Loughran-McDonald} lexicon to determine positive and negative words because it is tailored to financial news. Summary statistics for our trading period labels are: returns have a mean of -1.6e-05 and a standard deviation (SD) of 0.0228; squared returns have a mean of 0.0005 and a SD of 0.0042; and for sentiment the mean is -0.013 and the SD is 0.0231.

\section{Performance of sLDA}
\label{s:perf}

The parameter $\sigma$ in the Gibbs step (\ref{sgibbs}) can naturally be thought of as an estimate of the true standard deviation $\sigma_*$ of the residuals $y(j) - \eta^{\top}_*\bar{z}_j$. But it can also be viewed as a mechanism for controlling the relative weight put on the text data $\bw$ and the labels $y$ in choosing the topic assignments $\bz$, with smaller values of $\sigma$ putting greater weight on the labels. Indeed, as $\sigma\to 0$, $\varphi(x;\mu,\sigma^2)$ becomes infinite at $x=\mu$ and zero everywhere else. 

This relative weighting role for $\sigma$ is closely related to the Power-sLDA method of \cite{zhang_kjellstrom_2014} and the prediction constrained method of \cite{hughes_ea_2018}, both of which seek to find topics that improve label predictability. (Those methods do not use Gibbs sampling.) Power-sLDA effectively allows different documents to use different values of $\sigma$, with smaller values for longer documents. PC-sLDA puts a lower bound on the (in-sample) predictability of labels from topic frequencies. When translated to our setting, this corresponds to choosing $\sigma$ through the Lagrange multiplier of the lower-bound constraint, so requiring greater in-sample predictability leads to a smaller $\sigma$.

In our setting, choosing $\sigma$ small to improve label-predictability can give disastrously bad results. A small $\sigma$ can distort the topic model to overfit the labels, rendering the model useless for out-of-sample predictions. Interestingly, \cite{boydgraber_resnik_2010} report unstable results in trying to tune $\sigma$ and fix it at 0.25. 

In Section \ref{s:app}, we explain what happens theoretically as $\sigma\to 0$. But the results of Table \ref{t:r2s} provide a simple illustration.
We run sLDA with $\sigma$ fixed and report the in-sample $R^2$ after the final ''M" step. Using the estimated coefficients $\hat{\eta}$, we also calculate an out-of-sample (predictive) $R^2$ on a holdout sample. (See Section \ref{s:branch} for details of this calculation.) At small values of $\sigma$, the in-sample $R^2$ approaches 1, suggesting that the topic model is doing an unreasonably good job explaining returns. But the predictive $R^2$ becomes very negative, revealing serious overfitting to the training data. In choosing topic assignments to fit the training labels at the expense of the validity of the topic model, the algorithm loses any ability to predict returns out of sample. The problem becomes particularly acute with a large number of topics. With more topics, we are more likely to encounter the scenario described in (\ref{minmax}), below, and more likely to find meaningless topic assignments that produce nearly perfect in-sample fits to the data. 

\begin{table}[ht]
\centering
\caption{$R^2$ using sLDA with fixed $\sigma$}
\label{t:r2s}
\scalebox{0.94}{
\begin{tabular}{|c|*{2}{| >{}c<{}}|*{2}{|>{}c<{}}|}
 \hhline{|-||--||--|}
 \multirow{2}{*}{$\boldsymbol{\sigma}$} &\multicolumn{2}{c||}{\textbf{10 Topics}} &\multicolumn{2}{c|}{\textbf{200 Topics}} \\ 
 \hhline{|~||--||--}
 & In-sample & Out-sample & In-sample & Out-sample \\
 \hhline{=::==::==}
0.9 & 0.0013 & -0.0005 & 0.019 & 0.015 \\
0.75 & 0.001 & -0.0008 &0.0081 & 0.0032 \\
0.5 & 0.001 & -0.0007 & 0.009 & 0.0017 \\
0.25 & 0.001 & -0.0007 &0.019 & 0.018 \\
0.1 & 0.0009 & -0.0007 & 0.012& 0.007 \\
0.01 & 0.0009 & -0.0012 & 0.014 & 0.009 \\
0.001 & 0.0012 & -0.001 & 0.028 & 0.017 \\
1e-04 & 0.0009 & -0.0002 & 0.41 & -0.332 \\
1e-05 & 0.0094 & -0.0062 & 0.95 & -0.441 \\
1e-06 & 0.96 & -5.16 & 0.98 & -0.17 \\
 \hhline{-||--||--}
\end{tabular}}
\end{table}

Choosing a larger or more accurate value of $\sigma$ can help but does not eliminate the problem. If $\sigma$ is too large, sLDA reduces to LDA. The true value $\sigma_*$ in Step (2)(b) of the generative model is the standard deviation of the residuals $y(j) - \eta^{\top}_*\bar{z}_j$, $j=1,\dots,J$. It therefore seems reasonable to estimate $\sigma_*$ from the residuals in the regression in each ``M'' step. Figure \ref{f:updated} shows the evolution of in-sample $R^2$ across independent runs of this method. At a fixed number of iterations, the results vary widely, and along a single path the in-sample $R^2$ sometimes climbs rapidly and then falls.  We study this case in greater detail in Section \ref{s:app}, and here we describe less formally why it often fails.

\def\wscale{0.85}
\begin{figure}[ht]
\caption{sLDA using updated $\sigma$}
\label{f:updated}
\includegraphics[width=\wscale\linewidth]{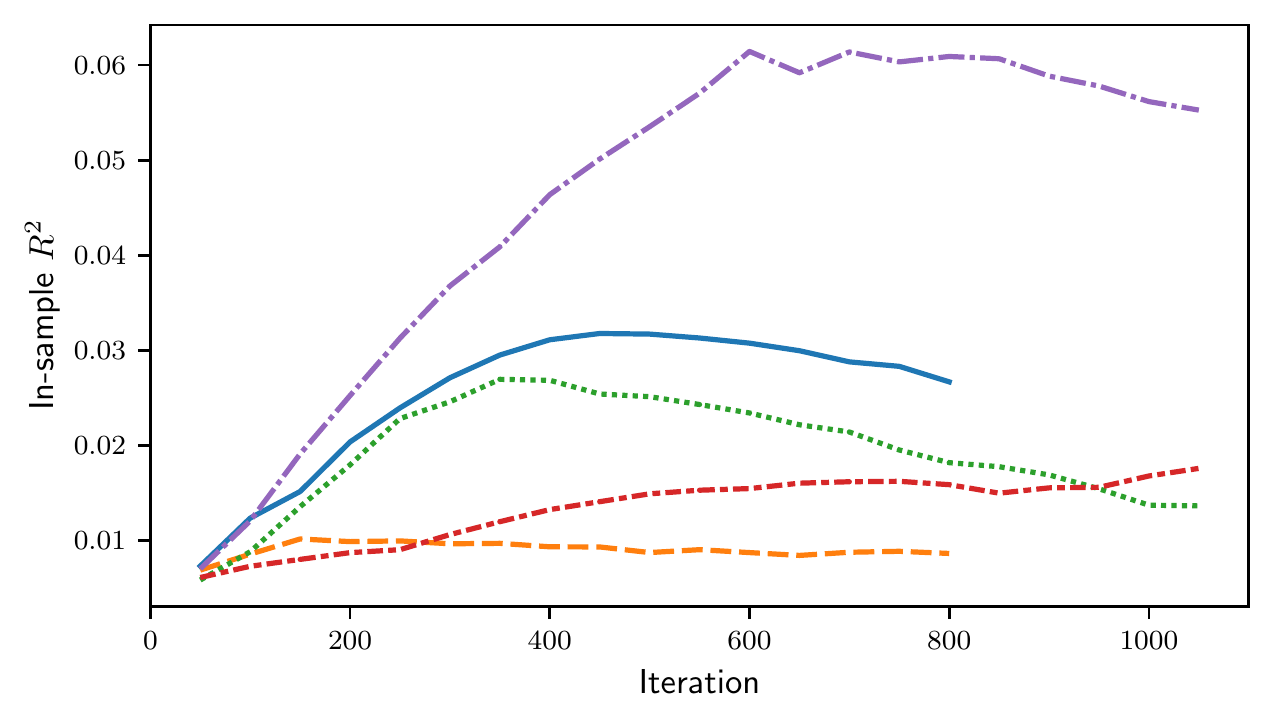}
\end{figure}

From a regression perspective, the root of the problem with the Gibbs/EM algorithm is that it uses the same data to estimate the coefficients $\eta$ and to select the regressors $\bar{z}_j$. This offers too much flexibility and can lead to serious overfitting. Any error in the current estimate of $\hat{\eta}$ gets amplified when the Gibbs sampler favors topic assignments tailored to $\hat{\eta}$; at the next ``M'' step, the errors are further amplified when the new $\hat{\eta}$ is optimized for flawed topic assignments, and the process repeats.

For example, Section \ref{s:app} considers the case that, for fixed $\hat{\eta}=\eta$, 
\begin{equation}
\min_k \eta_k < y(j) < \max_k \eta_k, \quad j=1,\dots,J,
\label{minmax}
\end{equation}
or, through a relabeling of topics,
\begin{equation}
\eta_1 < y(j) < \eta_2, \quad j=1,\dots,J.
\label{ycx}
\end{equation}
Then each $y(j)$ can be expressed as a convex combination
$
y(j) = \lambda_j\eta_1 + (1-\lambda_j)\eta_2 \text{ for } \lambda_j\in(0,1).
$
For a topic assignment $\bz$ with $\bar{z}_j(1)\approx \lambda_j$, $\bar{z}_j(2) = 1-\lambda_j$, and all other $\bar{z}_j(k)=0$, we have
\begin{equation}
y(j) \approx \bar{z}_j(1)\eta_1 + (1-\bar{z}_j(2))\eta_2 = \eta^{\top}\bar{z}_j.
\label{ymix}
\end{equation}
In other words, if at some stage in our EM iterations we get a vector of coefficients that satisfy (\ref{ycx}), then a degenerate set of topic assignments as in (\ref{ymix}) will lead to near-perfect in-sample label fits, but these assignments largely ignore information in the text data.

The Gibbs/EM algorithm, therefore, has two types of overfitting:
\begin{itemize}
\item \emph{M-overfitting}: The usual tendency of regression
coefficients $\hat{\eta}$ to provide better estimates in-sample than out-of-sample;
\item \emph{E-overfitting}: The less familiar result of adapting
the regressors $\bar{z}_j$ to improve fits to labels with $\hat{\eta}$
fixed. This is problematic even if $\hat{\eta}=\eta_*$, the true
vector of coefficients.
\end{itemize}
To address these shortcomings, we need to separate the estimation
of coefficients from the estimation of topic assignments.

\section{Branching LDA}
\label{s:branch}

We achieve this separation through a more computationally intensive process that reinforces topic assignments that perform better out-of-sample.  We compare a few alternatives. All methods split the data into training, validation, and test (holdout) samples.

\textbf{LDA.} As a baseline, we run ordinary collapsed Gibbs sampling for LDA for 700 iterations on the training and validation data. We regress the labels $y(j)$ on the final topic assignments $\bar{z}_j$ to compute estimated coefficients $\hat{\eta}$, using only the training and validation data. We then apply the trained LDA model to assign topics $\bz'$ to the holdout sample and compute topic proportions $\bar{z}_j'$ for the held out documents. Using the coefficient vector $\hat{\eta}$ from the training data, we calculate estimates $\hat{\eta}^{\top}\bar{z}_j'$ of the labels $y'(j)$ in the holdout sample. We evaluate performance using the predictive $R^2$, $1 - \frac{\sum_j (y'(j)-\hat{\eta}^{\top}\bar{z}_j')^2}{\sum_j(y'(j)-\bar{y}')^2}$, where $\bar{y}'$ is the average label in the holdout sample, and the sums run over documents in the holdout sample. Importantly, we do not re-estimate $\hat{\eta}$ through a regression on the holdout sample. We repeat this process 10 times. We use 10 replications here and below to see a range of outcomes with reasonable computational effort.

\textbf{Best-Of LDA.} We start with plain LDA. After an initial period of 250 iterations, we keep 10 subruns spaced 50 iterations apart. For each subrun, we evaluate a predictive $R^2$ on the validation sample, using cross-validation: we randomly split the validation sample into five subsets; we estimate $\hat{\eta}$ on four subsets; evaluate the predictive $R^2$ on the fifth subset; and average the predictive $R^2$ over the five ways of doing this split. We repeat this process for 10 independent runs. This gives us 10 topic assignments after 250 iterations, 10 after another 50 iterations, and so on. At each number of iterations, we pick the best model as measured by the average predictive $R^2$ on the validation sample. (We have also experimented with keeping the second best model as a form of regularization.) This leaves us with 10 winning models. We evaluate their predictive $R^2$s on the holdout sample using the procedure described for plain LDA. 

\textbf{Branching LDA.} We start with 10 independent plain LDA runs, and, after an initial period of 250 iterations, use the cross-validation procedure described under Best-Of LDA to compute a predictive $R^2$ for each of these.  The run with the highest predictive $R^2$ becomes the parent run.  We then branch the Gibbs sampler into 10 new independent ``child'' runs.  Each child run starts with an identical topic assignment $\bz$ to the parent.  After 50 iterations, we again use the cross-validation procedure described under Best-Of LDA to compute a predictive $R^2$ for the 10 child runs and the parent run. The best of these eleven runs becomes the new parent. (We have also experimented with keeping the second best subrun and excluding the previous parent as forms of regularization.)  We continue the Gibbs sampling with 10 independent child runs for 50 iterations, and repeat the process 8 times. We evaluate the parents from the last 10 branchings based on their predictive $R^2$ on the holdout sample. 

Best-Of LDA and Branching LDA are two of many possible ways to address the two types of overfitting described at the end of Section \ref{s:perf}. We never estimate $\eta$ on the same data we use to evaluate predictions, so we never commit M-overfitting. More importantly, we never use the current $\eta$ estimate to steer the Gibbs sampler, so we never commit E-overfitting; we instead use $\eta$ for cross-validation. Nevertheless, compared with plain LDA, these methods reinforce outcomes of the LDA Gibbs sampler that perform best at estimating labels in the validation set, and we will see that this translates to improved performance on the holdout set. Of the many realizations of the topic assignments $\bz$ that are roughly equally good at explaining the text $\bw$, we are searching for those that do a better job explaining the labels $y$.

\section{Numerical Results}
\label{s:num}

We compare results with four types of labels. As a benchmark, we use data simulated under the sLDA generative model; this allows us to compare methods when the model is correctly specified. We test 100- and 200-topic models using $V=$ 5,000, $J=$  90,000, and $N_j=$ 524, which is the average number of words per document in our TR data. Our simulated labels have an $R^2$ of 0.25 when regressed on topic frequencies. With the TR data, we let $y(j)$ be either the stock return for each company in each period, or the squared return (volatility), or the same-trading-period news sentiment.  The returns and volatilities are contemporaneous with the topic assignments.  We include sentiment because it has an $R^2$ of around 0.5, whereas the $R^2$s for returns and volatility are very small.

\begin{figure}[ht]
    \caption{Out-of-sample $R^2$s; x-axis gives the number of topics.} \label{f:oosR2}
    \includegraphics[width=\wscale\linewidth]{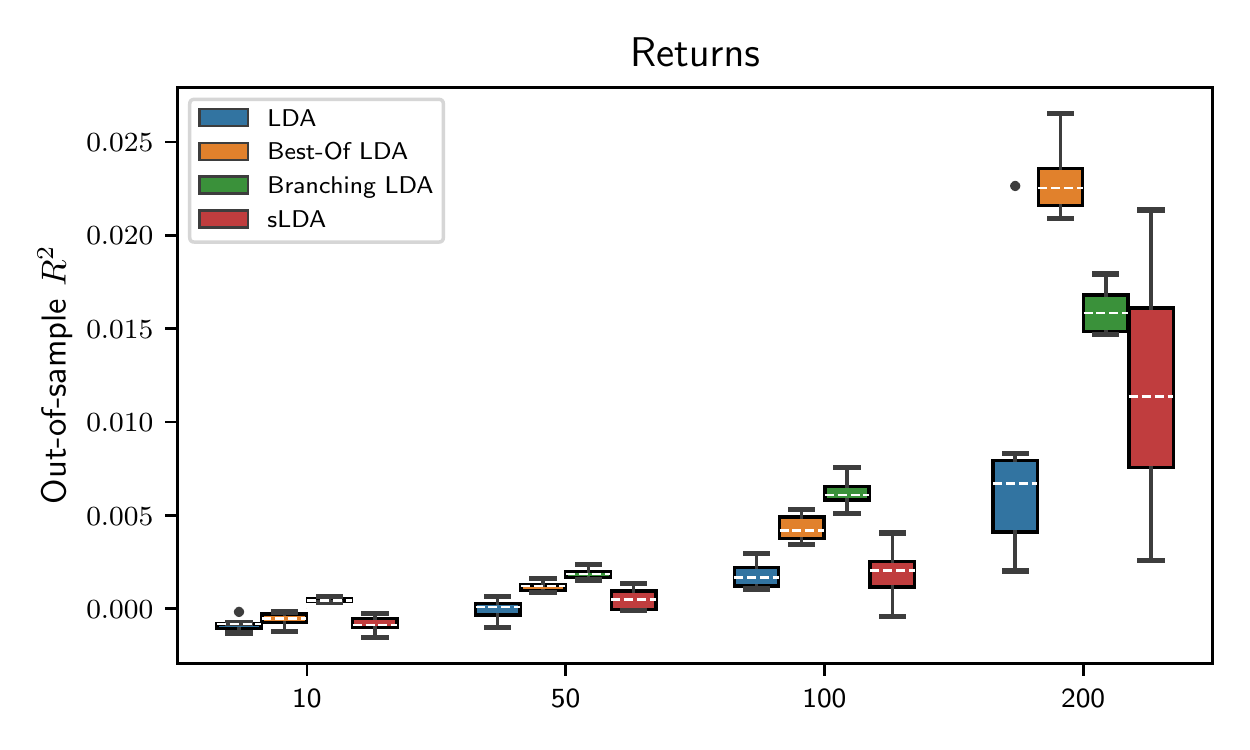}
    \includegraphics[width=\wscale\linewidth]{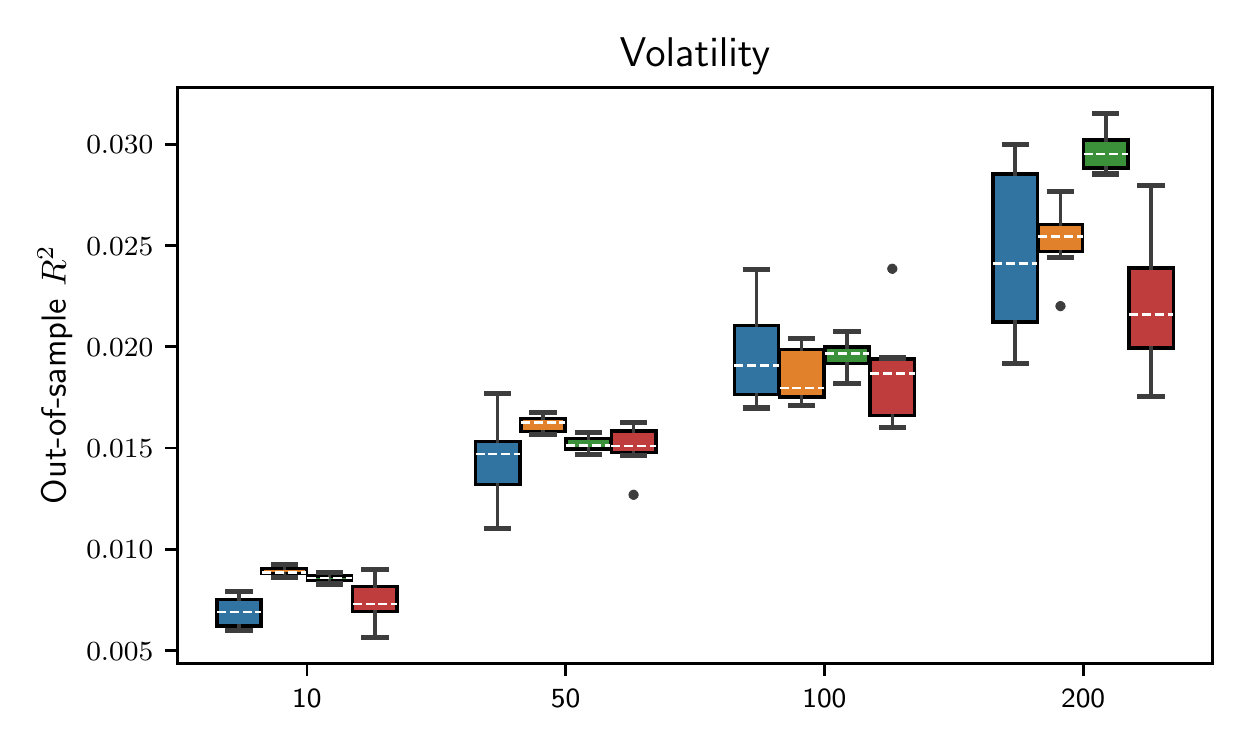}
    \includegraphics[width=\wscale\linewidth]{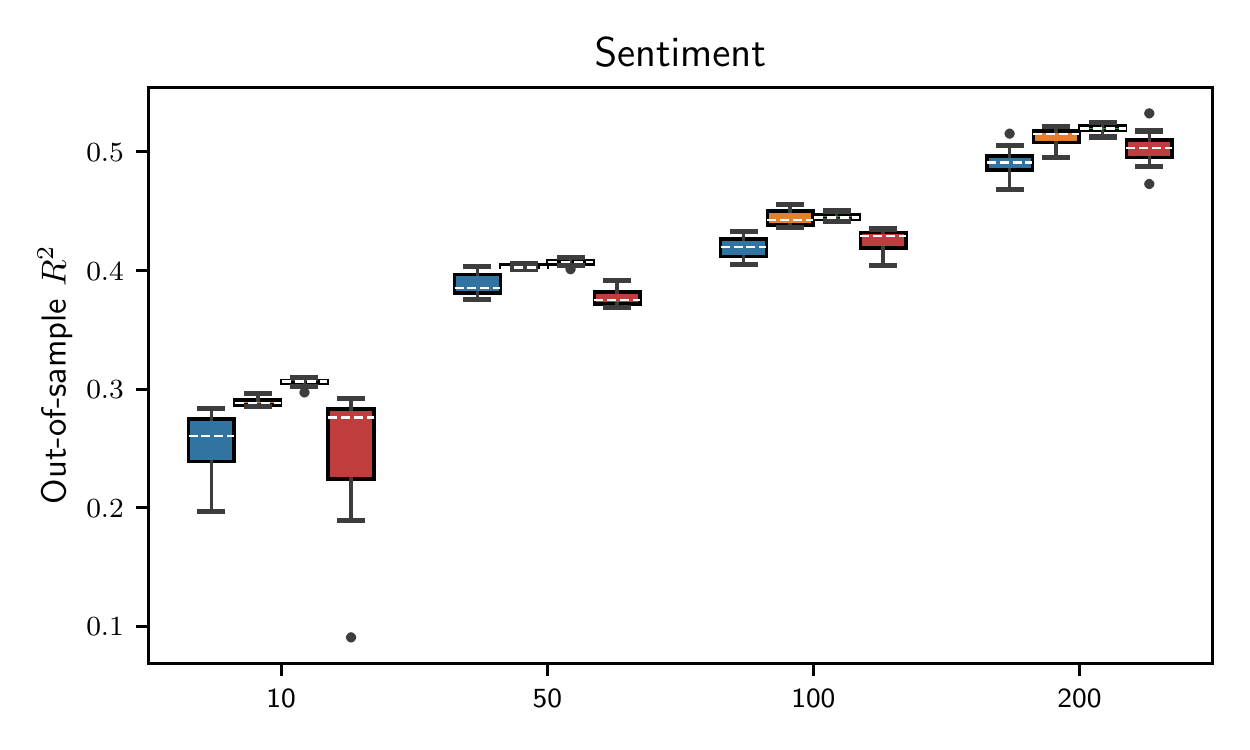}
\end{figure}

For each of these four types of labels, we compare four methods (LDA, Best-Of LDA, Branching LDA, and sLDA) through box plots of 10 predictive $R^2$s for each method. (We use 10 to see a range of outcomes with reasonable computational effort.)  We repeat the comparison for 10, 50, 100, and 200 topics, except that the simulated data uses only 100 and 200.  For sLDA, we fix the $\sigma$ parameter at the standard deviation of the labels $y(j)$. Using a fixed value reduces the risk of overfitting, and using the standard deviation of the labels (rather than an arbitrary value like 0.25) ensures that $\sigma$ has roughly the right magnitude.

\begin{figure}[ht] \ContinuedFloat
    \caption{continued; x-axis gives number of topics.}
    \includegraphics[width=\wscale\linewidth,trim={0 0.71cm 0 0},clip]{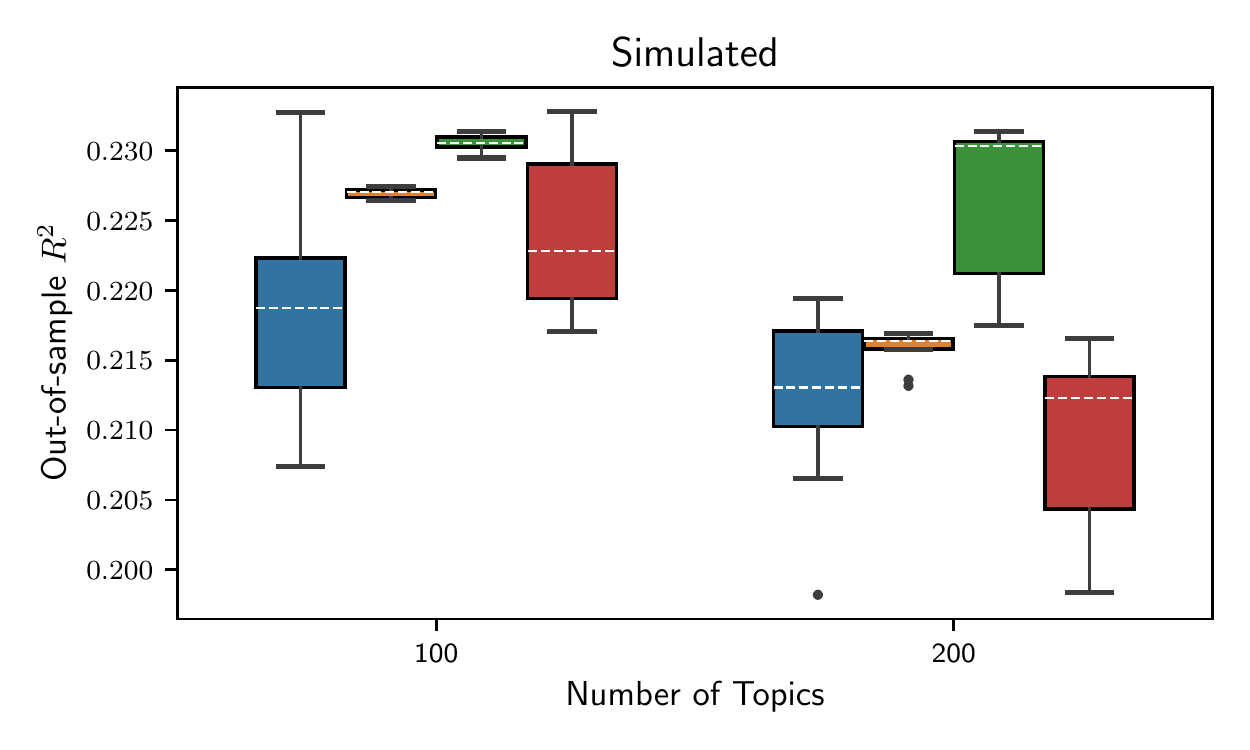}
\end{figure}
\noindent

As can be seen from Figure \ref{f:oosR2}, in all cases, the highest median performance (shown by a dashed white line in the box plots) is produced by either Best-Of or Branching LDA. This consistent pattern confirms the value of our overall approach to searching for topic models while avoiding overfitting.

In explaining returns (top panel), Branching LDA is the consistent winner at 10, 50, and 100 topics. In all comparisons, LDA and sLDA show the greatest dispersion in results. Best-Of and Branching produce more consistent results.
In the context of the in-sample 0.45\% $R^2$ for contemporaneous returns in \cite{tsm-2008}, our 100- and 200-topic model results are impressive.  It should be noted that our $R^2$'s are out-of-sample since we don't re-estimate the $\hat \eta$ vector in the holdout sample; this makes our results that much more noteworthy.

Finding topics to explain sentiment (third panel) should be an easier case for sLDA because sentiment has a much higher $R^2$. But Branching LDA performs a bit better at all four numbers of topics.

Figure \ref{f:ll} compares log-likelihoods across experiments for returns. These log-likelihoods are calculated from the training data for the text data only, ignoring the labels, following the standard procedure used for plain LDA. They therefore provide measures of the quality of fit to the text data. Two features stand out: at a given number of topics, there is very little variation within each method, and there is very little variation across methods. The methods perform roughly equally well in fitting the training text. Branching LDA achieves a slightly lower log-likelihood, but this is entirely appropriate given its higher predictive $R^2$. By reinforcing the fit to the labels, we expect to give up some fit to the text.

As a check on the topics identified by the various methods, we show examples in Table \ref{t:topics}. For each method, we pick the run that produced the highest predictive $R^2$ for returns in a 200-topic model. For each such winning run, we find the three most negative and most positive topics, as measured by the coefficient estimates $\hat{\eta}_k$. For each of these topics, we show the top 10 most probable words.

All of the topics displayed look sensible, and a couple show up consistently across methods. Of the LDA topics, the third negative and second positive topic are a bit surprising as candidates for most influential topics, as is the second most negative topic under sLDA.  In contrast, Best-Of LDA and Branching LDA find interesting candidates for their most positive topics. These comparisons are anecdotal, but they support the quality of the text portion of models designed to explain returns.

\begin{figure}[ht]
\caption{Comparisons of log-likelihood for return models} \label{f:ll}
\includegraphics[trim={0 0.1cm 0 0},clip,width=\wscale\linewidth]{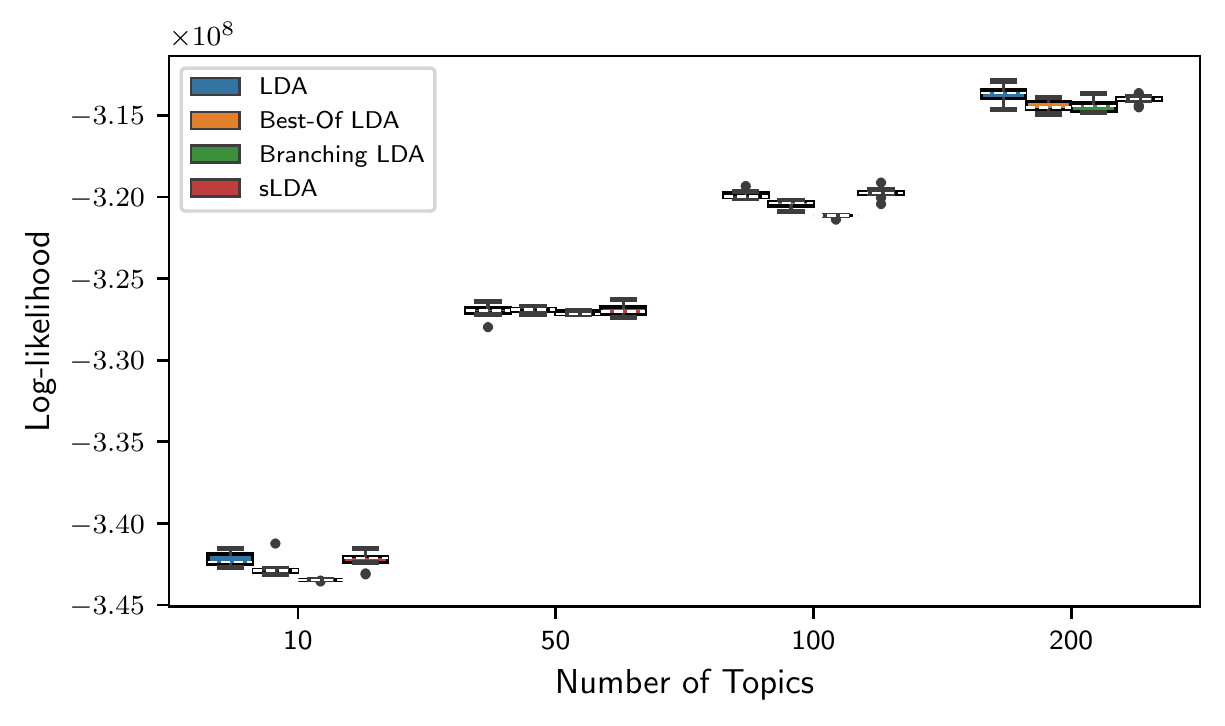}
\end{figure}

\begin{table*}[ht]
\caption{Most-negative/positive topics and their coefficients in explaining returns with 200 topics}
\label{t:topics}
\centering
\scalebox{0.85}{
\begin{tabular}{|l|lc|c|}
\hline
 \multicolumn{3}{|l|}{\textbf{LDA}} & Coefficient  \\
\hline
 Negative   & 1 &  price	cut	drop	fall	fell	lower	declin	analyst	quarter	weak & -0.098 \\
 \cline{2-4}
  & 2 & insid	https	video	watch	morn	trade	reut	short	transcript	open & -0.011 \\
  \cline{2-4}
 & 3 & climat	coal	carbon	emiss	energi	cruis	gas	fuel	environment	norway & -0.009 \\
 \hline
 Positive & 1  & quarter	revenu	analyst	expect	earn	estim	profit	result	forecast	rose & 0.015 \\
 \cline{2-4}
  & 2  & korea	korean	hospit	hca	seoul	tenet	oper	kim	jin	lee & 0.011 \\
  \cline{2-4}
  & 3  & deal	offer	buy	acquisit	cash	close	merger	bid	combin	agre & 0.011 \\
\hline\hline
 \multicolumn{4}{|l|}{\textbf{Best-Of LDA}}  \\
\hline
 Negative   & 1 &  analyst	fall	stock	drop	cut	fell	forecast	price	declin	expect & -0.090 \\
 \cline{2-4}
  & 2 & bankruptci	debt	file	creditor	apollo	restructur	oper	fund	protect	manag & -0.013 \\
  \cline{2-4}
 & 3 & store	sale	retail	maci	toy	depart	brand	holiday	subrat	patnaik & -0.008 \\
 \hline
 Positive & 1  & fund	hedg	manag	investor	activist	capit	elliott	ackman	stock	invest & 0.014 \\
 \cline{2-4}
  & 2  & quarter	revenu	analyst	expect	profit	estim	rose	earn	forecast	net & 0.011\\
  \cline{2-4}
  & 3  & deal	offer	buy	acquisit	close	cash	bid	combin	merger	agre & 0.011 \\
\hline\hline
 \multicolumn{4}{|l|}{\textbf{Branching LDA}} \\
\hline
 Negative   & 1 &  quarter	expect	cut	profit	price	analyst	result	forecast	lower	drop & -0.063 \\
 \cline{2-4}
  & 2 & store	retail	sale	dollar	sruthi	maci	home	ramakrishnan	general	tree  & -0.008 \\
  \cline{2-4}
 & 3 & investig letter	inform	probe	statement	review	request	issu	regul	alleg & -0.008 \\
 \hline
 Positive & 1  & board	activist	investor	icahn	sharehold	elliott	fund	manag	stake	director & 0.019 \\
 \cline{2-4}
  & 2  & quarter	revenu	profit	analyst	earn	expect	cent	net	estim	rose & 0.011 \\
  \cline{2-4}
  & 3  & deal	buy	acquisit	close	merger	combin	agre	transact	acquir	cash & 0.010 \\
\hline\hline
 \multicolumn{4}{|l|}{\textbf{sLDA}}  \\
\hline
 Negative   & 1 &  quarter fell drop fall	analyst expect declin forecast cut sale & -0.139 \\
 \cline{2-4}
  & 2 & solar energi power wind project renew electr sunedison panel develop & -0.007 \\
  \cline{2-4}
 & 3 & ipo offer public price stock initi rang	 rais lanc net & -0.006 \\
 \hline
 Positive & 1  & growth sale profit quarter	expect revenu forecast result margin	analyst & 0.016 \\
 \cline{2-4}
  & 2  & quarter revenu	 analyst cent expect profit estim  earn rose net & 0.014 \\
  \cline{2-4}
  & 3  & deal buy acquisit merger close combin busi acquir cash agre & 0.012 \\
\hline
\end{tabular}}
\end{table*}

\section{Analysis of the \texorpdfstring{$\sigma$}{sigma} Parameter}
\label{s:app}

This section formalizes the observations in Section \ref{s:perf} on E-overfitting in sLDA.  We examine what happens when the $\sigma$ parameter is too small or when it is estimated through regression residuals.

\subsection{Small \texorpdfstring{$\sigma$}{sigma}}

Building on (\ref{postrue}), consider approximate posterior distributions
\begin{equation}
p_a(\bz|\bw,y;\eta,\sigma) \propto p(\bz,\bw)\prod_{j=1}^{J} e^{-\frac{1}{2\sigma^2}(y_j-\eta^{\top}\bar{z}_j)^2},
\label{posapp}
\end{equation}
with $\sigma>0$ and $\eta\in\mathbb{R}^K$.
We use the notation $p_a$ to emphasize that (\ref{posapp}) is only an
approximation to the true posterior in (\ref{postrue}).
We always use an approximation in practice because $\sigma_*$
and $\eta_*$ are unknown.

For fixed $\eta$, let
$
\Delta = \min_{\bz'} \sum_{j=1}^{J}(y_j - \eta^{\top}\bar{z}'_j)^2,
$
the minimum taken over all possible assignments of the $K$ topics
to the words in the corpus. $\Delta$ denotes the minimum mean squared error (MSE)
achievable for fixed $\eta$. Denote by
$
Z_{\Delta} = \{\bz': \sum_{j=1}^J(y_j - \eta^{\top}\bar{z}'_j)^2=\Delta\}
$
the set of topic assignments that achieve the minimum MSE for
$\eta$.

The value of $\Delta$ at $\eta$ could be larger
or smaller than the value of $\Delta$ at the true coefficient vector $\eta_*$,
so $\Delta$ is not a reliable measure of the ability of topics to predict
the labels $y$; and,
for $\eta\not=\eta_*$, membership of $\bz'$ in $Z_{\Delta}$ is not
a reliable measure of the ability of the topic assignments $\bz'$ to explain
the data $y$. In particular, for $\eta\not=\eta_*$, there is no reason
to expect that true or representative topic assignments are in $Z_{\Delta}$.

Despite the fact that the topic assignments in $Z_{\Delta}$ are not
particularly meaningful, we will see that the approximate posterior (\ref{posapp})
concentrates probability on this set when $\sigma$ is small, 
and largely ignores the information $p(\bz|\bw)$
from the text data. Without an estimate of the true parameter $\sigma_*$,
one may inadvertently choose $\sigma$ too small. More importantly, the
methods of \cite{zhang_kjellstrom_2014} and \cite{hughes_ea_2018} indirectly drive $\sigma$
to smaller values by raising the Gaussian factor in the posterior to a power greater
than 1.

For $\bz\in Z_{\Delta}$, let
$q(\bz,\bw) = p(\bz,\bw)/\sum_{\bz'\in Z_{\Delta}} p(\bz',\bw)$.

\begin{proposition}
\label{p:p1}
As $\sigma\to 0$, the approximate posterior (\ref{posapp}) becomes concentrated
on $Z_{\Delta}$:
$$
\lim_{\sigma\to 0}p_a(\bz|\bw,y;\eta,\sigma) = 
\left\{\begin{array}{ll}
0, & \bz\not\in Z_{\Delta}; \\
q(\bz,\bw), & \bz\in Z_{\Delta}.
\end{array}\right.
$$
\end{proposition}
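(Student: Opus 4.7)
The plan is a direct Laplace-type concentration argument exploiting the fact that the space of topic assignments $\bz$ is finite (each of the $\sum_j N_j$ word positions receives one of $K$ labels). Let $S(\bz) = \sum_{j=1}^J (y_j - \eta^\top \bar z_j)^2$, so that $\Delta = \min_{\bz'} S(\bz')$ and $Z_\Delta = \{\bz : S(\bz) = \Delta\}$. Writing the normalizing constant of (\ref{posapp}) explicitly gives
$$
p_a(\bz\mid\bw,y;\eta,\sigma) = \frac{p(\bz,\bw)\, e^{-S(\bz)/(2\sigma^2)}}{\sum_{\bz'} p(\bz',\bw)\, e^{-S(\bz')/(2\sigma^2)}}.
$$

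The first step is to rescale numerator and denominator by $e^{\Delta/(2\sigma^2)}$, which yields
$$
p_a(\bz\mid\bw,y;\eta,\sigma) = \frac{p(\bz,\bw)\, e^{-(S(\bz)-\Delta)/(2\sigma^2)}}{\sum_{\bz'} p(\bz',\bw)\, e^{-(S(\bz')-\Delta)/(2\sigma^2)}}.
$$
Now every exponent $-(S(\bz')-\Delta)/(2\sigma^2)$ is nonpositive, and it is strictly negative precisely when $\bz' \notin Z_\Delta$. Splitting the denominator as the sum over $Z_\Delta$ plus the sum over its complement, the first piece is $\sum_{\bz' \in Z_\Delta} p(\bz',\bw)$ (independent of $\sigma$), while the second piece is a finite sum of terms of the form $p(\bz',\bw)\, e^{-c_{\bz'}/(2\sigma^2)}$ with $c_{\bz'} = S(\bz')-\Delta > 0$, each of which tends to zero as $\sigma \to 0$. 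Finiteness of the state space ensures this convergence is uniform in the number of summands, so the denominator converges to $\sum_{\bz' \in Z_\Delta} p(\bz',\bw)$, which is strictly positive since $p(\bz,\bw) > 0$ for every $\bz$ under LDA with positive Dirichlet hyperparameters.

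The second step is to take the limit in the numerator. If $\bz \in Z_\Delta$ the exponential factor is identically $1$, so the numerator equals $p(\bz,\bw)$, and dividing by the limit of the denominator gives $q(\bz,\bw)$. If $\bz \notin Z_\Delta$ the exponential factor tends to $0$ while $p(\bz,\bw)$ is bounded, so the numerator tends to $0$ and hence $p_a \to 0$. Combining the two cases yields the claim.

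There is no real obstacle here: the finiteness of the space of topic assignments makes all limits well-behaved and lets us avoid any measure-theoretic or uniform-integrability considerations. The only mild subtlety is confirming that the denominator has a strictly positive limit, which I would handle in one sentence by noting that $Z_\Delta$ is nonempty (it is the argmin of a function on a nonempty finite set) and that $p(\bz,\bw) > 0$ for all $\bz$.
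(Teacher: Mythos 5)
Your proposal is correct and follows essentially the same route as the paper's proof: rescale numerator and denominator by $e^{\Delta/(2\sigma^2)}$, observe that terms with $S(\bz')>\Delta$ vanish as $\sigma\to 0$, and pass to the limit using finiteness of the assignment space. Your explicit remarks on the strict positivity of the limiting denominator and the nonemptiness of $Z_{\Delta}$ are minor additions the paper leaves implicit, but they do not change the argument.
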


\begin{proof}
The right side of (\ref{posapp}) is
\begin{eqnarray}
\lefteqn{
\frac{p(\bz,\bw)\exp\{-\frac{1}{2\sigma^2}\sum_j(y_j-\eta^{\top}\bar{z}_j)^2\}}{\sum_{\bz'}p(\bz',\bw)\exp\{-\frac{1}{2\sigma^2}\sum_j(y_j-\eta^{\top}\bar{z}'_j)^2\}} 
} && \nonumber \\
&=&
\frac{p(\bz,\bw)\exp\{-\frac{1}{2\sigma^2}(\sum_j(y_j-\eta^{\top}\bar{z}_j)^2-\Delta)\}}{\sum_{\bz'}p(\bz',\bw)\exp\{-\frac{1}{2\sigma^2}(\sum_j(y_j-\eta^{\top}\bar{z}'_j)^2-\Delta)\}}
\label{prat} 
\end{eqnarray}
For $\bz'\not\in Z_\Delta$,
$\sum_j(y_j-\eta\bar{z}'_j)^2-\Delta >0$,
so, as $\sigma\to0$,
$$
-\frac{1}{2\sigma^2}\left(\sum_j(y_j-\eta\bar{z}'_j)^2-\Delta\right) \to -\infty.
$$
The denominator of (\ref{prat}) converges to $\sum_{\bz'\in Z_{\Delta}}p(\bz',\bw)$; the numerator equals $p(\bz,\bw)$ if $\bz\in Z_{\Delta}$ and converges to zero otherwise. Thus, in the limit as $\sigma\to0$, $p_a(\bz|\bw,y;\eta,\sigma)$, is proportional to $p(\bz,\bw)$ for $\bz\in Z_{\Delta}$ and zero otherwise. As $p_a$ is (\ref{prat}) normalized so that its sum over $\bz$ is 1, the result follows.
\end{proof}

This result shows that if $\sigma$ is small then the approximate posterior
concentrates probability on the topic assignments that approximate
the labels $y$ well (the topic assignments in $Z_{\Delta}$), which
may not be particularly meaningful, especially if $\eta\not=\eta_*$. 
The topic model (through $p(\bz,\bw)$) plays a secondary role by determining the
relative weight of the assignments $\bz\in Z_{\Delta}$.

\subsection{Residual \texorpdfstring{$\sigma$}{sigma}}

The breakdown in Proposition~\ref{p:p1} results from using an arbitrary
value for $\sigma$. 
An alternative is to use the standard regression estimate,
\begin{equation}
\hat{\sigma} = \sqrt{\frac{1}{J-K}\sum_{j=1}^{J}\epsilon^2_j},
\quad \epsilon_j = y_j - \eta^{\top}\bar{z}_j,
\label{sigeps}
\end{equation}
where, as before, $J$ is the number of documents and $K<J$ is the number
topics. 
For any coefficient vector $\eta\in\mathbb{R}^K$, consider the approximate posterior
\begin{equation}
p_a(\bz|\bw,y;\eta) \propto p(\bz,\bw)\prod_{j=1}^J \frac{1}{\sqrt{2\pi}\hat{\sigma}}e^{-\frac{1}{2\hat{\sigma}^2}(y_j-\eta^{\top}\bar{z}_j)^2}.
\label{peta}
\end{equation}

Suppose (\ref{ycx}) holds.
In a document with $N$ words, the components of $\bar{z}_j$ are
multiples of $1/N$, so $\lambda_j$ can be approximated by
$\bar{z}_j(1)$ to within $1/2N$, yielding the bound
\begin{equation}
|y_j -  \eta^{\top}\bar{z}_j| < \frac{\eta_2-\eta_1}{N},
\quad j=1,\dots,J.
\label{resbnd}
\end{equation}
Assume for simplicity that all documents have $N$ words.
It follows that this choice of $\bz$ makes
\begin{equation}
\hat{\sigma} \le \sqrt{\frac{J}{J-K}}\left(\frac{\eta_2-\eta_1}{N}\right).
\label{sigsmall}
\end{equation}
With a large number of words per document $N$, we can achieve
a near-perfect fit to the labels $y_j$ through a ``degenerate'' assignment
of words to just topics 1 and 2, as in (\ref{ymix}), ignoring the text data.

In contrast to (\ref{sigsmall}), we do {\it not\/} expect the standard deviation
of the true errors (under the sLDA generative model) to vanish as $N$ increases. Nevertheless, we will show that the approximate
posterior (\ref{peta}), based on the
regression standard error estimate (\ref{sigeps}), overweights degenerate topic assignments
achieving (\ref{ymix}) and (\ref{sigsmall}).

To make this idea precise, we need to consider a sequence of problems through which the number of words per document $N$ and the number of documents $J$ increase, holding fixed the number of topics $K$ and the vocabulary size $V$. Within the LDA generative model, we can think of $(\bz,\bw)$ as derived from the first $N$ words in the first $J$ documents in an infinite array.  The labels $y(j)$ work slightly differently because $y(j)$ depends on the proportions of topic assignments in document $j$ among the first $N$ words, so all $y(j)$ change as $N$ increases. With $\bar{z}_j$ approaching $\theta_j$ as $N$ increases, this is a minor point.

We want to compare the (approximate) posterior probabilities of two sets of topic assignments $\bz$ and $\bz'$, given $(\bw,y)$. For $\bz$, we have in mind topic assignments that ignore the words 
$\bw$ to achieve the near-perfect fit to labels in (\ref{ymix}). We capture the idea that the assignments $\bz'$ are more consistent with the true topic assignments by assuming that the resulting error standard deviation
\begin{equation}
\hat{\sigma}' = \sqrt{\frac{1}{J-K}\sum_{j=1}^{J} {\epsilon'}^2},
\quad
\epsilon' = y(j) - \eta^{\top}\bar{z}'_j,
\label{sigeps2}
\end{equation}
remains bounded away from zero. (Under the sLDA generative model,
with the true vector $\eta_*$ we expect $\hat{\sigma}'$ to approach
the true standard deviation $\sigma_*$.)
Thus, we expect  ``good'' topic assignments to belong to
$$
Z_G = \{\bz': \inf_{N\ge 1, J\ge K+1}\hat{\sigma}' >0\},
$$
whereas the ``bad'' topic assignments satisfying (\ref{ymix}) belong to
$$
Z_B = \{\bz: \sup_{N\ge 1, J\ge K+1}N \hat{\sigma} <\infty\}.
$$

We use the notation in (\ref{peta}), but we drop $\eta$ as an explicit
parameter on the left to simplify notation.
For any topic assignments $\bz$ and $\bz'$, the ratios
$p_a(\bz|\bw,y)/p_a(\bz'|\bw,y)$ and $p(\bz|\bw)/p(\bz'|\bw)$
measure the relative (approximate) posterior probabilities of $\bz$ and $\bz'$ given $(\bw,y)$ and (exact posterior probabilities) given $\bw$, respectively.  The ratio of ratios
$$
R(\bz,\bz';\bw,y) = 
\left(\frac{p_a(\bz|\bw,y)}{p_a(\bz'|\bw,y)}\right)/\left(\frac{p(\bz|\bw)}{p(\bz'|\bw)}\right)
$$
measures how this relative probability is affected by the labels $y$ compared
with the text $\bw$.  Larger values of $R(\bz,\bz';\bw,y)$ indicate that the approximate
posterior (\ref{peta}) with $y$ included shifts more weight to $\bz$ relative to $\bz'$.
The following result shows that this ratio grows  fast for
$\bz\in Z_B$ and $\bz'\in Z_G$, indicating that (\ref{peta}) overwhelmingly
shifts weight from ``good'' to ``bad'' topic assignments.

\begin{proposition}
\label{p:reg}
If $\bz\in Z_B$ and $\bz'\in Z_G$ with
$\hat{\sigma}$ and $\hat{\sigma}'$ calculated according to
(\ref{sigeps}) and (\ref{sigeps2}), then for any fixed $J\ge 1$,
\begin{equation}
\liminf_{N\to\infty} \frac{\log R(\bz,\bz';\bw,y)}{\log N} > 0,
\label{rlim1}
\end{equation}
and, for all sufficiently large (but fixed) $N$,
\begin{equation}
\liminf_{J\to\infty} \frac{\log R(\bz,\bz';\bw,y)}{J} > 0.
\label{rlim2}
\end{equation}
\end{proposition}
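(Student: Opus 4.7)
The approach is to reduce $R(\bz,\bz';\bw,y)$ to an explicit closed form by exploiting the defining property of the residual-based $\hat{\sigma}$ in (\ref{sigeps}), after which (\ref{rlim1}) and (\ref{rlim2}) follow essentially by inspection from the definitions of $Z_B$ and $Z_G$.

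First, I would substitute (\ref{peta}) into the definition of $R$. Since $p(\bz|\bw)/p(\bz'|\bw) = p(\bz,\bw)/p(\bz',\bw)$ (the marginal $p(\bw)$ cancels), the LDA factors $p(\bz,\bw)$ and $p(\bz',\bw)$ cancel out of $R$ as well, leaving
$$
R(\bz,\bz';\bw,y) = \left(\frac{\hat{\sigma}'}{\hat{\sigma}}\right)^{J}\cdot\frac{\exp\bigl\{-\tfrac{1}{2\hat{\sigma}^{2}}\sum_{j}(y_{j}-\eta^{\top}\bar{z}_{j})^{2}\bigr\}}{\exp\bigl\{-\tfrac{1}{2(\hat{\sigma}')^{2}}\sum_{j}(y_{j}-\eta^{\top}\bar{z}'_{j})^{2}\bigr\}}.
$$
The crucial observation is that, by the very definition (\ref{sigeps}), $\sum_{j}(y_{j}-\eta^{\top}\bar{z}_{j})^{2} = (J-K)\hat{\sigma}^{2}$, and similarly $\sum_{j}(y_{j}-\eta^{\top}\bar{z}'_{j})^{2} = (J-K)(\hat{\sigma}')^{2}$ by (\ref{sigeps2}). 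Each exponent therefore collapses to $-(J-K)/2$, independent of the assignment, and the two exponentials cancel exactly, yielding the clean identity
$$
R(\bz,\bz';\bw,y) = \left(\frac{\hat{\sigma}'}{\hat{\sigma}}\right)^{J}.
$$

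With this identity in hand, the two limits are immediate. Membership $\bz\in Z_{B}$ supplies a finite constant $C_{1}$ with $\hat{\sigma}\le C_{1}/N$ uniformly in $N,J$, and $\bz'\in Z_{G}$ supplies $C_{2}>0$ with $\hat{\sigma}'\ge C_{2}$ uniformly. Writing $c = C_{2}/C_{1} > 0$, we obtain
$$
\log R(\bz,\bz';\bw,y) \;\ge\; J\log(cN).
$$
For (\ref{rlim1}), with $J\ge 1$ fixed, dividing by $\log N$ and letting $N\to\infty$ gives $\liminf_{N\to\infty}\log R/\log N \ge J > 0$. For (\ref{rlim2}), choosing $N$ large enough that $cN>1$ makes $\log(cN)>0$; dividing by $J$ and letting $J\to\infty$ gives $\liminf_{J\to\infty}\log R/J \ge \log(cN) > 0$.

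The only step that is not mechanical is spotting the self-cancellation of the Gaussian exponents, and that is precisely what makes the residual-based choice of $\hat{\sigma}$ so consequential: after the cancellation, the entire effect of the label data $y$ on the posterior ratio is concentrated in the factor $(\hat{\sigma}'/\hat{\sigma})^{J}$, which blows up geometrically whenever $\hat{\sigma}$ is artificially small. The degenerate assignments in $Z_{B}$ force $\hat{\sigma}=O(1/N)$, and raising this to the $J$th power produces the fast divergence claimed in both (\ref{rlim1}) and (\ref{rlim2}).
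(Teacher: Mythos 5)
Your proof is correct and follows essentially the same route as the paper's: both hinge on the observation that the residual-based definition of $\hat{\sigma}$ makes each Gaussian exponent collapse to $-(J-K)/2$, so the exponentials cancel and $R(\bz,\bz';\bw,y)=(\hat{\sigma}'/\hat{\sigma})^{J}$, after which the bounds $\hat{\sigma}\le C_1/N$ and $\hat{\sigma}'\ge C_2$ from the definitions of $Z_B$ and $Z_G$ give both limits. No gaps.
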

Roughly speaking, the limit in (\ref{rlim1})  says that $R(\bz,\bz';\bw,y)$ grows faster than $N^c$, for some $c>0$, and (\ref{rlim2}) says that it grows faster than $c^{J}$, for some $c>1$. Thus, as either the number of words per document grows or the number of documents grows, (\ref{peta}) puts much more probability on $\bz\in Z_B$ than $\bz'\in Z_G$, relative to what the text model alone would do.

\begin{proof}
From (\ref{peta}) we get
$$
R(\bz,\bz';\bw,y) = \left(\frac{\hat{\sigma}'}{\hat{\sigma}}\right)^{J}
\exp\left(\frac{1}{2\hat{\sigma}^{\prime 2}}\sum_{j=1}^{J}{\epsilon}^{\prime 2}_j-
\frac{1}{2\hat{\sigma}^2}\sum_{j=1}^{J}\epsilon^2_j\right).
$$
In light of (\ref{sigeps}) and (\ref{sigeps2}),
$$
\frac{1}{2\hat{\sigma}^{\prime 2}}\sum_{j=1}^{J}{\epsilon}^{\prime 2}_j-
\frac{1}{2\hat{\sigma}^2}\sum_{j=1}^{J}\epsilon^2_j
= \frac{1}{2}(J-K)-\frac{1}{2}(J-K) = 0, \text{ and so}
$$
\begin{equation}
\log R(\bz,\bz';\bw,y) = J[\log(\hat{\sigma}')-\log(\hat{\sigma})]. 
\label{rlower}
\end{equation}
But $\bz'\in Z_G$
means that $\hat{\sigma}'$ is bounded away from zero, so \\*
$\liminf_{N\to\infty} \log\hat{\sigma}'/\log N \ge 0$;
and $\bz\in Z_B$ implies that $\hat{\sigma} <c/N$, for
some constant $c>0$, so $\limsup_{N\to\infty} \log\hat{\sigma}/\log N \le -1$.
Applying these limits
in (\ref{rlower}) yields (\ref{rlim1}).
For (\ref{rlim2}), observe that because $\hat{\sigma}'$ is bounded away
from zero as $N$ increases and $\hat{\sigma}$ is bounded above
by $c/N$, for some $c>0$, we have $(\hat{\sigma}'/\hat{\sigma})>1+\delta$,
for all sufficiently large $N$, for some $\delta>0$. As $R(\bz,\bz';\bw,y)=(\hat{\sigma}'/\hat{\sigma})^{J}$,
(\ref{rlim2}) follows.
\end{proof}

\section{Conclusions}

We have developed an approach to finding topics in news articles to explain stock returns, volatility, and other types of labels. Supervised topic modeling requires balancing the ability of a model to explain labels and text simultaneously. We have shown that a standard stochastic EM approach to sLDA is vulnerable to serious overfitting. We avoid overfitting through a random search of LDA topic assignments while reinforcing configurations that perform well in cross-validation. Our approach improves over standard methods in tests on a large corpus of business news articles.  The tools developed here are part of a broader investigation into differences between intraday and overnight news and returns.

\bibliographystyle{ACM-Reference-Format}
\bibliography{ICAIF_paper_82}
\end{document}